\pgfplotsset{compat=newest}
\pgfplotsset{ legend style={font=\footnotesize}}
\newtheorem{theorem}{Theorem}
\newtheorem{corollary}[theorem]{Corollary}
\newtheorem{definition}[theorem]{Definition}
\newenvironment{proof}[1][Proof]{\textbf{#1} }{\ \rule{0.5em}{0.5em}}
\newcommand{\R}{\mathbf{R}}
\newcommand{\bv}{\mathbf{v}}
\newcommand{\bc}{\mathbf{c}}
\newcommand{\bx}{\mathbf{x}}
\newcommand{\by}{\mathbf{y}}
\newcommand{\rar}{{\rightarrow}}
\newcommand{\cD}{{\cal{D}}}
\newcommand{\beq}{\begin{equation}}
\newcommand{\eeq}{\end{equation}}
\newcommand{\pf}{Perron-Frobenius}
\DeclareMathOperator\erf{erf}
\newcommand{\argmin}{\arg\!\min}
\begin{document}

\title{Finite-time Partitions for Lagrangian Structure Identification in Gulf Stream  Eddy Transport}

\date{\today}

\author{Alexandre \surname{Fabregat Tom\`as}}
\email[Email: ]{alex.fabregat@csi.cuny.edu}
\affiliation{Department of Mathematics, City University of New York,
College of Staten Island, 2800 Victory Boulevard, Staten Island, New York, NY 10314}
\author{Igor Mezi\'c}
\affiliation{College of Engineering, University of California Santa Barbara,
Santa Barbara, CA 93106}
\author{Andrew C. Poje}
\affiliation{Department of Mathematics, City University of New York,
College of Staten Island, 2800 Victory Boulevard, Staten Island, New York, NY 10314}

\begin{abstract}

We develop a methodology to identify finite-time Lagrangian structures from data and models
using an extension of the Koopman operator-theoretic methods developed for velocity fields with simple
(periodic, quasi-periodic) time-dependence. To achieve this, the notion of the Finite Time Ergodic (FiTER) partition is
 developed and rigorously justified. In combination with a   clustering-based approach, the methodology enables identification of the temporal evolution of Lagrangian structures in a 
classic, benchmark,  oceanographic transport problem, namely the cross-stream flux induced by the interaction of a meso-scale
Gulf Stream Ring eddy with the main jet. We focus on a single mixing event driven by the interaction between an energetic cold 
core ring (a cyclone), the strong jet, and a number of smaller scale cyclones and anticyclones. The new methodology enables reconstruction of 
Lagrangian structures in three dimensions and analysis of their time-evolution.
\end{abstract}

\pacs{92.10.ak,92.10.Lq,47.10.Fg,47.27.De}

\maketitle

\section{Introduction}

Understanding the Lagrangian transport between distinct flow features is 
particularly important in the geophysical context where rotation and stratification, coupled with the extreme
geometric aspect ratio,  produce long-lived, coherent vortices in the horizontal velocity field \cite{Provenzale99}. 
The `geometric approach' to quantifying advective exchange by calculation of Lagrangian transport boundaries,
first developed to understand the dynamics of classical Hamiltonian systems, has been both widely used and 
explicitly developed in context of meso-scale oceanographic and atmospheric flows (see,
\citep{Wiggins2005, Koshel2006, Samelson2013} for reviews
in the oceanographic context).
The need for analysis of geometric structures that organize advection is not
purely academic. Predicting rates and pathways of material transport by environmental flows 
has been an essential element in the response to several recent catastrophic events, namely, the volcanic
eruption of Eyjafjallaj{\"o}kull (2010), the Deepwater Horizon oil spill
(2010), and the nuclear disaster in Fukushima (2011). These events highlight
the importance of detecting organizing geometric structures in (near) real
time from data, either measured or generated by detailed simulation models;
consequently, such problems have become a very active intersection of
dynamical systems and fluid dynamics.

Presently, there exists a large and diverse literature of techniques for detecting, quantifying and visualizing the 
organizing structures controlling Lagrangian mixing and transport in unsteady, aperiodic flows \cite{Gildor2014}. 
Conceptually, a coherent structure in
the Lagrangian frame is a set of similar trajectories, \textit{ie} trajectories that move together in any inertial
reference frame for an extended period of time. 
Methods for unambiguously defining sets of initial conditions satisfying such criteria, either explicity or by defining the set boundaries, are readily
available for autonomous or time periodic flows and rely heavily on studying the geometry of invarient manifolds defined by asymptotic time properties.   
The theory of \emph{Lagrangian Coherent Structures}~\cite{Haller2005} (LCS)
identifies barriers that organize the transport in flows with complex
time-dependence. Initially, LCS were closely associated with computation of
Finite-Time Lyapunov Exponent fields \cite{Shadden2005}; more recently, they have
been re-formulated using a variational principle \cite{Haller2011a, Haller2012,
  Blazevski2014}, which defines them as certain geodesic lines of the local
deformation field induced by the fluid flow. This new definition allows a
finer classification of LCS, both in two- and three-dimensions, based on the
type of deformation, e.g., hyperbolic, elliptic, corresponding to different
behaviors of fluid parcels in the flow. The recent review by
Haller \cite{Haller2015} gives a detailed coverage of the current
state-of-the-art in the LCS theory.

Magnitudes of the local material deformation are typically estimated by
processing velocity gradient information and typically difficult to compute precisly in the
absence of the detailed data about the velocity field, e.g., when the system
is sampled only by sparse trajectories or the underlying velocity field is noisy or
underresolved. To avoid this, alternate measures to local deformation gradient information have been
proposed as bases for defining coherent stucture boundaries \cite{rypinaetal:2011, manchoetal:2013}. 
In sparsely-sampled planar systems, 
trajectories can be represented by space-time braids --- extremely-reduced,
symbolic representations of trajectories. The resulting approach, known as
\emph{braid dynamics} \cite{Boyland2000, Allshouse2012, Thiffeault2010} has
been successful in providing lower bounds on the amount of deformation present
in the flow in limited-data settings. The obtained bounds have been used both
in design and analysis of the material advection; unfortunately, there are
currently no extensions of braid dynamics to three-dimensional flows.

Instead of looking for barriers to transport (boundaries of coherent sets), as is the case with the LCS
theory, the theory of \emph{almost-invariant sets} attempts to directly identify a collection of
sets, fixed in initial condition space, such that the exchange of material between these set is minimized
under the action of the flow over a finite time. These sets act as routes for the material transport. 
The approach is based on the \pf{} transfer operator, which models how the flow moves
distributions of points, instead of individual trajectories. The \pf{}
operator is always infinite-dimensional and linear; the identification of
almost-invariant sets is then intimately connected with approximating its
eigenfunctions. In practice, the \pf{} operator is approximated by a finite dimensional transfer operator
produced by binning initial conditions. While the \pf{} operator has been a staple of the ergodic and
probability theory since the early 20th century, it was introduced to applied,
non-probabilistic context by Dellnitz and collaborators \cite{Dellnitz1999, Dellnitz2002} as the
basis for identification of invariant sets in time-invariant dynamical
systems. The theory has since been expanded to include detection of
almost-invariant sets of autonomous systems \cite{Froyland2003, Froyland2009},
and flows with more general time dependencies including applications to
geophysical transport \cite{Froyland2010,
Froyland2010a, froyland2015}. 

Spatial invariants of dynamical systems are directly related to infinite-time averages of
functions along Lagrangian trajectories via ergodic theory \cite{Mezic:1994}. 
In this context, evidence exists that even finite-time averages of functions can enable detection of
geometric structures important for fluid transport \cite{Poje1999}.
The utility of time-averages has been corroborated on numerical and
experimentally-realizable flows with simple time dependence\cite{Mezic1999,
  Malhotra1998, Mezic2002}.
Computations of (approximately) ergodic invariant sets using 
long-time averages of a large set of averaged function have been investigated \cite{Mezic:1994, Mezic1999, Levnajic2010, Budisic2012b} and
the connection between these approaches and LCS theory has recently been made \cite{Farazmand2015a}.

Here we concentrate on the approach established in \cite{Mezic:1994, Mezic1999, Levnajic2010, Budisic2012b} to 
extend the theory based on finite-time Lagrangian averages of the velocity field. We make an explicit connection with the Koopman operator 
theory that describes evolution of observables under dynamics of flows \cite{MezicandBanaszuk:2004,Mezic:2005,Mezic:2013}. For observables, we use 
fixed sets of basis functions combined with 
machine learning (standard high-dimensional clustering algorithms) in order to determine finite-time structures within 
which the trajectories themselves possess most similar statistics. The goal is to develop a robust procedure for identifying structures that 
(1) requires only Lagrangian position data as input (2) is based on global properties of the full set of available trajectories and 
(3) naturally extends to three dimensional velocity fields.

Specifically, we seek to elucidate the time dependence of interacting coherent mesoscale features (spatial scales $>$10km) in a highly resolved, 
highly time dependent, submesoscale-eddy-permitting model of the Gulf Stream. To date, the majority of Lagrangian coherent structure 
applications in geophysical flows have considered  meso-scale dominated velocity fields either produced by relatively coarsely resolved/idealized  models 
\cite{Miller:1997,Lipp08,Bettencourt2012,Pratt2014} or derived directly from available remote sensing data using geostrophy
\cite{Ngan1999,Ovidio:2004,Beronvera2008}. In contrast to these fields which necessarily impose a rapid fall-off of energy at small spatial scales, 
current ocean models produce distinct signatures of submesocale turbulence, with broad wavenumber and frequency spectra, 
as model resolution is increased towards $ \Delta x \lesssim$ 1km \cite{Capet2008a, Bracco2016}.
A primary goal here is to demonstrate the ability of time-averaged, cluster-based partitions to identify dominant 
three-dimensional mesoscale mixing structures in the presence of significant small scale fluctuations.
The specific problem, quantifying eddy-induced mixing between and across the Gulf Stream,
was the original motivation for many LCS studies \cite{BowRos85,Samels91,DutPal94} and has received considerable attention \cite{SonRos95,Rog99,
Yuan2002,Budyansky2009}.

The paper is organized as follows: In section \ref{sectII} we discuss the model velocity field and transport geometry. 
In section \ref{sec.observables} we discuss the Koopman operator approach to visualization of Lagrangian structures and discuss the selection of observables. In section \ref{sec.partition} we  introduce the notion of Finite-Time Ergodic  (FiTER) Partition 
and present results on identification of Lagrangian coherent structures using a clustering algorithm. In section \ref{sec.reconstruction} we discuss how the technique of FiTER partition, together with the clustering algorithm enables reconstruction of three-dimensional Lagrangian structures and analysis of their evolution in time. We conclude in section \ref{sectVI}. In the Appendix, we provide the rigorous theory of FiTER partitions and define the associated  measure-theoretic notions of almost invariant sets and maximally coherent sets.

\section{Numerical Model}
\label{sectII}
\subsection{HYCOM numerical simulation for the North Atlantic} 

The Hybrid Coordinate Ocean Model (HYCOM) \cite{bleck02, halliwell04, chassignet06} is used to 
numerically simulate the Gulf Stream region. The resulting database corresponds to a high-resolution subdomain 
with a Mercator grid 
resolution of $1/48^\circ$ ($\sim 2$\si{\kilo\meter}) spanning $81.44^\circ$ W $28.78^\circ$ N to $50^\circ$ W 
$45.72^\circ$ N (see Fig. \ref{fig.salvel1}).
The vertical spatial discretization consists of 30 hybrid (z-sigma-isopycnal) 
layers of which the top six are in $z$ coordinates while the ocean interior is discretized in potential density 
coordinates. 
The high-resolution subdomain is one-way nested within a coarse grid ($1/12^\circ$) run covering the Atlantic Ocean 
and the Mediterranean Sea in the latitudinal range $28^\circ$ S and $80^\circ$ N.
The fine grid simulation, 
initialized using the results from a 8 year long coarse grid run,
is evolved for a total span of 501 days.
After discarding the initial 134 days to minimize the effects of re-gridding, 
the complete database catalogs  flow evolution over one whole year starting on May 15 (day 1 hereafter).
The model data is identical to that described in \citep{mensa2013} to which the reader is referred for dynamical
details.
In terms of Gulf-Stream characterization, velocities and eddy kinetic energy, the numerical results are consistent 
with drifter data analysis (\cite{fratantoni01, garraffo01,lumpkin13}). 
Similarly, the sea surface height variability is the 
Gulf Stream extension is in agreement with altimeter data \cite{ducet00}. 

\begin{figure}
\centering
\includegraphics{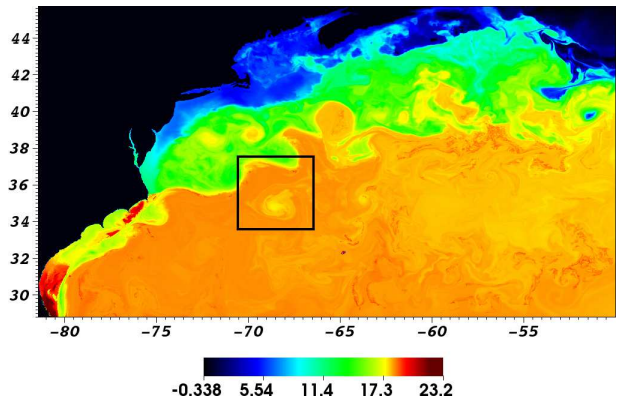}
\caption{
North Atlantic HYCOM numerical simulation snapshot showing temperature at isopycnal 12 for day 80.
The black box shows the region of interest defined by initial extent of the Lagrangian particle patch.
}
\label{fig.salvel1}
\end{figure}

We are interested in elucidating the evolution of coherent 
structures and their associated transport where the meandering Gulf Stream generates strong 
mesoscale flow features.
Such conditions are found throughout, but we concentrate on a specific event during the summer (day 70 to day 125)
in the subregion defined by the corners $70.5^\circ W \, 33.65^\circ N$ and $66.5^\circ W \, 37.65^\circ N$
shown in Fig. \ref{fig.salvel1} as a black square.
The projection of this $[L_x \times L_y]=[4^\circ \times 4^\circ]$
domain roughtly corresponds to an extent of $[361 \times 444] \si{\kilo\meter}$.
Several HYCOM snapshots of instantaneous velocity and salinity for this temporal and spatial domain of interest
are shown in Fig. \ref{fig.salvel}.
The domain is approximately centered on a strong cyclonic, `cold-core' ring located south
of the Gulf Jet Stream characterized by strong eastward currents and low salinity.
The cold-core ring persists as a coherent feature over the 25 day time span, interacting
with the meandering jet and smaller scale anti-cyclonic structures throughout this time. 
The lower vertical boundary of the domain of interest is given by the 23rd isopycnal model layer, shown for day 80 
in Fig. \ref{fig.isopyc} along with the bathymetry, characterized by the Caryn Seamount 
rising approximately 2,000 meters above the Sohm abyssal plain.

\begin{figure}
\centering
\includegraphics{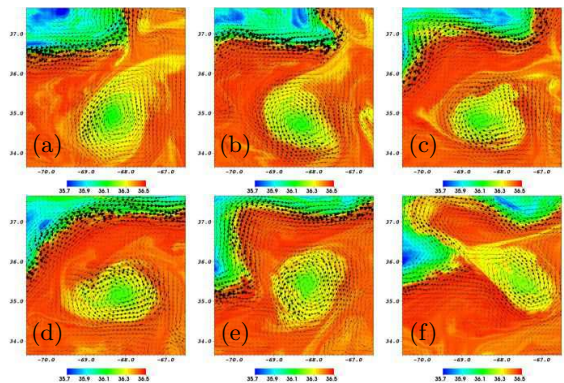}
\caption{
Top to bottom, left to right: instantaneous velocity vectors and salinity fields for days 70, 75, 80, 85, 90 and 95 at isopycnal 12.
}
\label{fig.salvel}
\end{figure}

\begin{figure}
\centering
\includegraphics{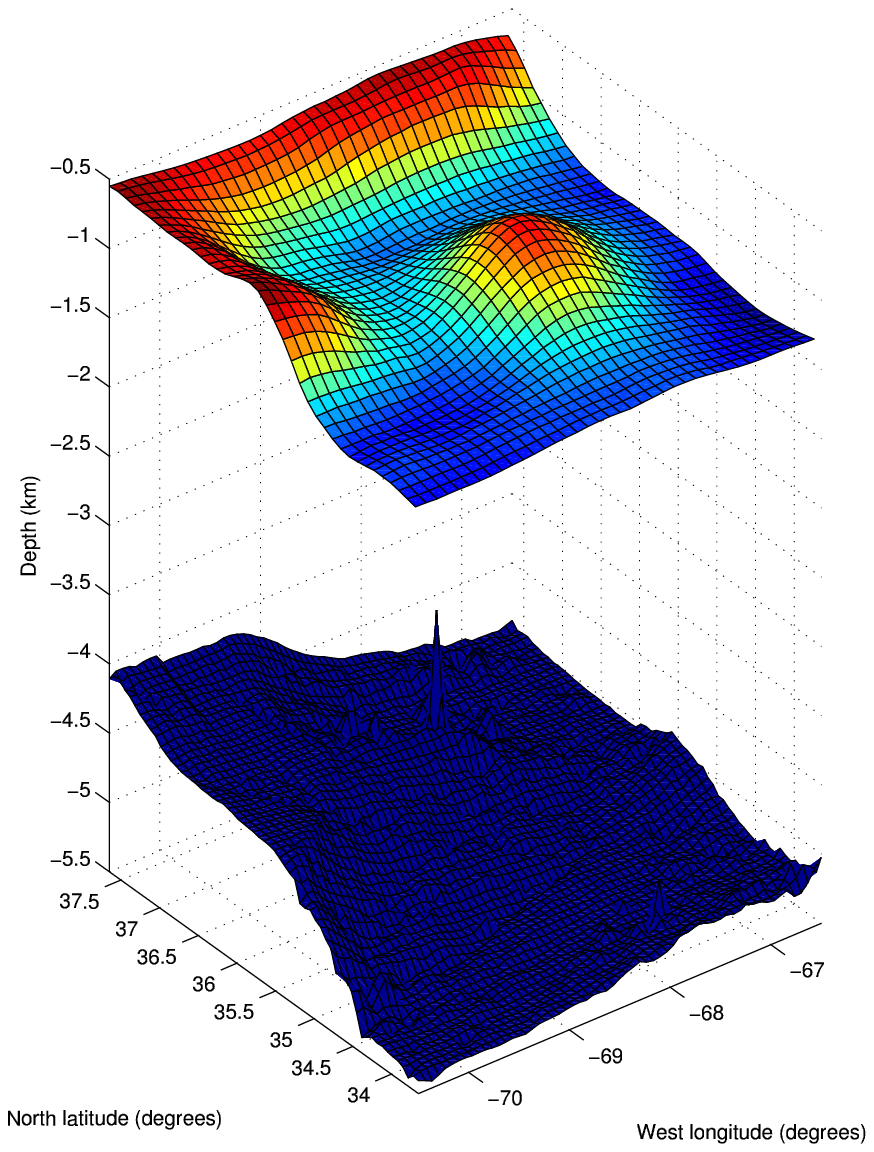}
\caption{Lowest isopycnal and bathymetry extending over the initial synthetic drifters patch centered at $72.5^\circ \,
W 35.65^\circ N$.
The dominant cold core feature is readily seen.}
\label{fig.isopyc}
\end{figure}

\subsection{Synthetic trajectories}

The Lagrangian data is given by trajectories of 90,000 synthetic drifters, initialized at various times on a uniform
300 $\times$ 300 horizontal grid spanning the domain of interest in each of the 23 isopycnal layers considered.
The trajectories are computed by integrating
$d\vec{x}_p/dt = \vec{u}_p \left( \vec{x}_p,t\right)$
for each isopycnal level $i$ over the time interval $[ t_0,t_0 +\tau_i]$.
The integration scheme used is fourth order Runge-Kutta and each particle $p$ velocity
$\vec{u}_p$ is obtained via linear temporal interpolation between consecutive 12-hour HYCOM velocity fields and 
third order spline interpolation in both spatial directions.

Fig.~\ref{fig.traj1} shows the position of the full set of $90,000$ drifters on
the uppermost (black) and lowest (red) isopycnals after 7 days. The magenta lines mark the location of the initial patch.
For fixed integration time, particles in
the lowest isopycnal travel much less than those in the uppermost level as a result of the large diapycnal (vertical) 
shear of the along-isopycnal velocities.

As explained below, the approach used to organize the complexity of the full set of trajectories across 
isopycnals is based on computing time averages of sets of spatially dependent functions. 
The disparity in Lagrangian averages on different layers can be accounted for either by explicitly
choosing the layer (depth) dependence of such functions \emph{a priori}, or by considering identical sets of functions at 
each isopycnal and adjusting averaging times accordingly. For simplicity of presentation, we choose the latter 
approach and rescale integration times across isopycnals to ensure that the average length of trajectories are
equal.
The temporal rescaling used throughout is shown in Fig.~\ref{fig.traj2}. The effect on the trajectories
is illustrated in Fig.~\ref{fig.traj1} where the final positions of the 
$\tau_{23} = 4.7 \tau_{1} = 33$ day trajectories in the lowest isopycnal are 
shown in blue. Under this temporal rescaling, the average length of trajectories in the lowest (blue) and the topmost 
(black) layers are commensurate. 

\begin{figure}
\centering
\includegraphics{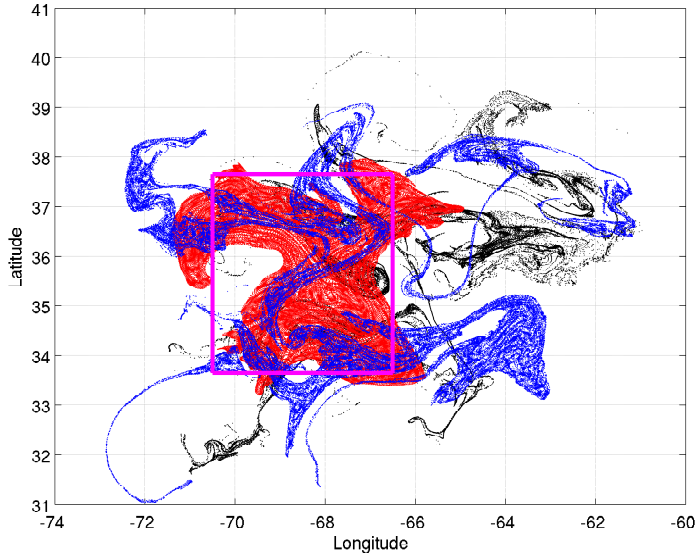}
\caption{
Position of the $90,000$ drifters after 7 days for the uppermost (black) and the
lowest (red) isopycnals showing the differences in traveled distances as one moves
in the water column.
The positions after 33 days for the lowest isopycnals is shown in blue.
The magenta box indicates the initial positions of the drifter patch.}
\label{fig.traj1}
\end{figure}

\begin{figure}
\centering
\includegraphics{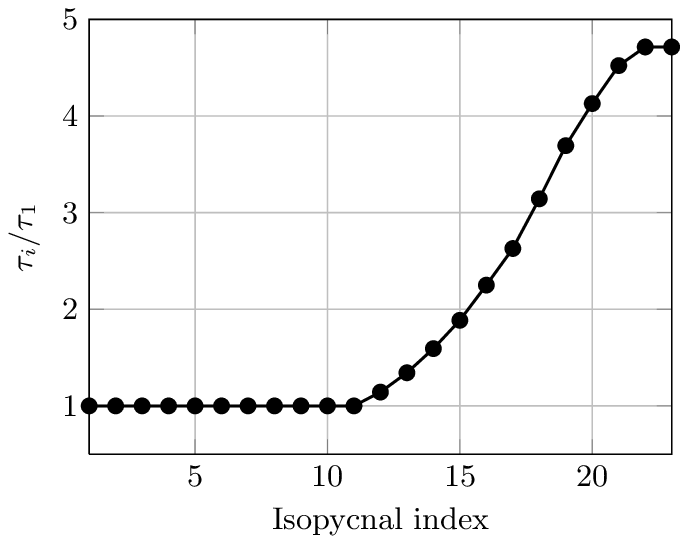}
\caption{
Integration time ratio for each isopycnal layer  relative to the near surface.}
\label{fig.traj2}
\end{figure}

\section{Observables: Lagrangian time-averaged basis functions}
\label{sec.observables}

Approaches based on the Koopman operator have been successfully used to analyze dynamical systems that, because of their high-dimensionality,
ill-description and uncertainties, were not well posed for the classical Poincar\'e approach based
on the concept of `dynamic of states' \cite{budisic12}.
The Koopman approach, based on the `dynamics of observables', has demonstrated to be a well-suited alternative
to overcome these difficulties which frequently appear in current engineering and \emph{big data} applications
including fluid dynamics in turbulent flows and Unmanned Aerial Vehicle operations.

In the typical setting, the dynamics on a state space or closed domain $M$ is set by an
iterated map or flow $T$ such that $T: M \rightarrow M$. 
An \emph{observable} on this state space
consists in a function $f: M \rightarrow \mathbb{C}$ where $f$ is an element of
a function space $\mathcal{F}$. 
Conceptually, an \emph{observable} can be described as a probe under the
action of the dynamical system that stores not just the trajectory $p$ but the trace $f(p)$.
The Koopman operator, $U: \mathcal{F} \rightarrow \mathcal{F}$, can then be defined as
\begin{equation}
\left[ U f\right](p)=\left[ f \circ T \right](p)=f(T(p))
\end{equation}
i.e.\ the composition of the observable $f$ and the iterated map $T$.
Typically we only have access to a limited collection of \emph{observables},
$\{ f_1,\dots,f_K\} \subset \mathcal{F}$, that can represent some relevant set for the 
specific problem or a function basis for $\mathcal{F}$.
As such, if $F=(f_1,\dots,f_K)^\textit{T} \in \mathcal{F}$, then the Koopman operator
$U_K: \mathcal{F}_K \rightarrow \mathcal{F}_K$ is defined as:

\begin{equation}
\left[ U_K F\right](p):=\left[
\begin{array}{ccc}
\left[ U f_1\right](p)\\
\vdots\\
\left[ U f_K\right](p)\\
\end{array}
\right].
\end{equation}

In the case of flows with time-dependence, we define a family of Koopman operators
$U_{t_0}^t$ parametrized by the initial time $t_0$ and the final time $T$ in the interval $0 < t \leq T$.
It turns out that time-average of the observables $f(\bx)$ along the trajectories in the finite-time interval can be represented as the average action of this family of operators on the observable
$$
\frac{1}{T-t_0}\int_{t_0}^T U^t_{t_0}f_i(\bx))dt,
$$
and these are in turn described by Dirac measures on trajectories $\nu_\bx^{t_0,T}$.This is very similar to the process in which ergodic measures are associated with infinite time averages in steady flows \cite{Mezic:1994}
(see the Appendix for a more precise description).
 
The ocean model flow fields considered here present two, inter-connected, challenges to 
to directly applying Koopman operator based partitions to the synthetic drifter data. 
First, the domain  of initial conditions, $M$, that we wish to consider 
represents a (small) bounded subset of the  full North Atlantic simulation ($M'$) 
and is  not closed under the flow. 
For finite time trajectories,  we must consider $T: M \rightarrow M'$ where $M \subset M'$, 
i.e.\ $T$ maps trajectories outside the initial domain $M$.
This implies some reconsideration of the shape and spatial extent of the functions typically
used in the necessarily finite basis set.
Secondly, the model velocity field is time-dependent but not periodic and consequently there 
is no single,  well defined, averaging period to separate fields
into mean and fluctuating components. Equivalently, the aperiodic flow-field defines an 
an infinite set of finite-time maps, $T$, parameterized by the finite-time flow interval 
considered.  Since the maps cannot be iterated, both the extent to which trajectories explore the
full domain and  basis-function averages will depend intimately on the averaging time used.

\subsection{Basis functions}

For simplicity of interpretation, we consider a finite set of basis functions given by an 
extension of the Haar wavelet form, $\psi$:
\begin{equation}
\psi_{\mathbf{r},\mathbf{s}}(\mathbf{x})
=
\erf \left( \zeta {(-1)}^{i} \tilde{x}_p \right)
\erf \left( \zeta {(-1)}^{j} \tilde{y}_p \right)
\label{eq.haar}
\end{equation}
where the domain of interest has been rescaled such that 
$M \in [-1/2,1/2]^2$ and
\begin{eqnarray*}
\hat{x}_p=x_p-\frac{s_x r_x}{2}, &\;\;& \hat{y}_p=y_p-\frac{s_y r_y}{2}  \\
i=\left \lfloor \frac{\hat{x}_p}{r_x} \right \rfloor, &\;\;& j=\left \lfloor \frac{\hat{y}_p}{r_y} \right \rfloor \\
\tilde{x}_p=\hat{x}_p-\frac{i}{r_x}-\frac{r_x}{2}, &\;\;& \tilde{y}_p=\hat{y}_p-\frac{j}{r_y}-\frac{r_y}{2} 
\end{eqnarray*}
and $\zeta$ controls the smoothness of the final form of the 2D Haar functional.
The \emph{observable} $f$ is then defined as:

\begin{equation}
f_{\mathbf{r},\mathbf{s}}(\mathbf{x}_p;\tau_i)
=\frac{1}{\tau_i}
\int_{0}^{\tau_i} 
\psi_{\mathbf{r},\mathbf{s}} \left(\mathbf{x_p} \right) dt
\label{eq.f}
\end{equation}
where $\mathbf{x}_p(t)=(x_p(t),y_p(t))$ is the particle position, $\tau_i$ is the integration time period
for isopycnal $i$, $\mathbf{r}$ is the wavelength and $\mathbf{s}$ sets, in each direction, the even and odd 
parity of the function.

To check the sensitivity of the ultimate partition approach to the starting choice of basis set, we will also 
consider harmonic bases,  $\phi$, 
where instead of the wavelength set by $\mathbf{r}$ we use the wavenumber $\mathbf{k}$:

\begin{align}
\phi_{\mathbf{k}} (\mathbf{x}_p)
=
\mathrm{e}^{2 \pi \mathrm{i} \left( \mathbf{k} \cdot \mathbf{x}_p\right)}\\
\hat{f}_{\mathbf{k}}(\mathbf{x}_p; \tau_i)
=\frac{1}{\tau_i}
\int_{0}^{\tau_i} \phi_{\mathbf{k}}(\mathbf{x}_p) dt
\label{eq.f2}
\end{align}

The top row in Figure \ref{fig.functions} shows four, Haar-based,  $\psi$ functions defined 
by choosing $(r_x,r_y,s_x,s_y)=(1,2,1,0)$, $(2,1,0,1)$, $(1/2,1/2,0,0)$ and $(1/4,1/4,1,1)$ 
in Eq.~\eqref{eq.haar}.
The smoothing parameter has been set to $\zeta=25$ and the rescaled initial domain $M$ is indicated as a 
black square. The choice $r>1$ allows for basis elements with support extending beyond $M$, potentially
useful in this open flow situation.

\subsection{Dependence on averaging time}

The same basis functions, Lagrangian averaged over times
$\tau_i = 2, 4, 8, 16$ and 32 days, are shown in rows 2-6 respectively of 
Fig.~\ref{fig.functions}. All data corresponds to model layer 12, 
slightly below the surface layers, but
well above the depth of the model Gulf Stream. For the shortest averaging times
$f_{\mathbf{r},\mathbf{s}} \sim \psi_{\mathbf{r},\mathbf{s}}$, and this
tendency is more pronounced for functions with larger spatial extent.
As the averaging time increases, finer scale features emerge in the observables, with each providing Euelrian frame information 
about the temporal evolution of Lagrangian trajectories.  
For $\tau_{12}= 8-16$ days, the  large wavelength observables clearly show the existence of a coherent cyclonic eddy 
core detraining particles to the retrograde motion south of the main jet. Similar time averages of the smaller spatial-scale
basis functions, in particular $\psi_{1/4,1/4,1,1}$,  reveal the existence of anti-cyclonic 
flow-features on the jet-side of the main cold-core ring.

The evolution of any observable with averaging time provides information about both the time-scales of the
Lagrangian motion within individual structures and the non-autonomous time-dependence of the structures themselves.
As seen in the bottom row of Fig.~\ref{fig.functions}, long-time averages of smaller spatial scale functions
converge to uniform, zero-mean fields while averages of spatially extended modes develop very fine-scale spatial 
filamentation. In non-autonomous, finite-time flows, the choice of averaging time for the observables is 
essentially equivalent to the choice of integration time in  Finite-Time or Direct Lyapunov Exponent, or Mesohyperbolicity 
based LCS methods (see, for example, the results shown in Figs.~\ref{fig.mh} and  \ref{fig.dle}).

The choice of optimal averaging time is subjective and depends on the time dependence of the flow and the ultimate
degree of spatial feature resolution sought. In the following we seek to combine information from a fixed set of 
observables to partition the full 3D+1 flow field. Fixing the smallest spatial scale basis functions to be 
$\mathbf{r}=(1/4,1/4)$ (nominally probing features with spatial scales 1/4 of the horizontal domain of interest), then the
results in Fig.~\ref{fig.functions} indicate that the choice $\tau_{12} = 8$ days is both long enough 
to allow the emergence of structure information in this observable and short enough to avoid homogenization. 
This choice of integration time
also reduces the degree of filamentation in the larger wavelength observables. The temporal evolution of the eddy
shown in Fig.~\ref{fig.salvel} indicates that $\tau_{12} = 8$ days is an intermediate time scale for the flow
field. Particles sample several rotation periods of the eddy and travel 
significant distances in the jet during this time period where the structures themselves have evolved more slowly.

\begin{figure}
\centering
\includegraphics{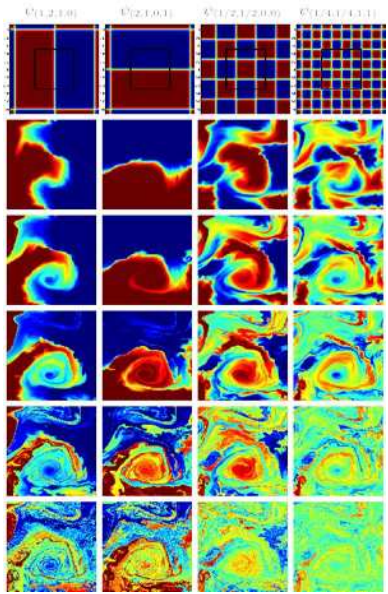}
\caption{
Top panel, left to right: Haar functions for $(r_x,r_y,s_x,s_y)=
 (1,2,1,0)$,
$(2,1,0,1)$,
$(1/2,1/2,0,0)$
and
$(1/4,1/4,1,1)$.
Panels in rows 2-6, top to bottom: Lagrangian averages at
$\tau=2, 4, 8, 16, 32$ for isopycnal $i=12$ and day $t_0=80$.
}
\label{fig.functions}
\end{figure}

\subsection{Depth Dependence}

With the temporal rescaling shown in Fig.~\ref{fig.traj2},  a single horizontal basis function can be used to
explore the evolution of the coherent features throughout the water column. With $\tau_{12}$ set to
8 days, the results for the averages of $\psi_{2,1,0,1}$ at four different isopycnal layers 
are shown in Fig.~\ref{fig.depth}. Layer number increases from (a) to (d).

The observable indicates the  persistence of the cyclonic feature with increasing depth. While there is a
clearly defined core near the surface, this feature is increasingly filamented in the lower layers. The nearly
elliptical surface expression evolves into a arrowhead shape at layer 18 indicating
entraining lobes of cyclonic trajectories along the southern edge of the ring (red) and detraining
trajectories (yellow and green) from the core. As expected, extreme vorticity 
gradients associated with the Gulf Stream in the upper layers  produce a strong transport barrier. With
increasing depth, this barrier is weakened as revealed by the presence of cross-jet transport pathways in the
time averaged function.  These pathways appear to be connected to the appearance, at intermediate layers, of
smaller-scale anti-cyclonic features on the jet-side of the cyclone. Overall, the single observable indicates
evolution from a relatively isolated cyclonic feature at the surface to a complex multi-pole structure at depth.

\begin{figure}
\centering
\includegraphics{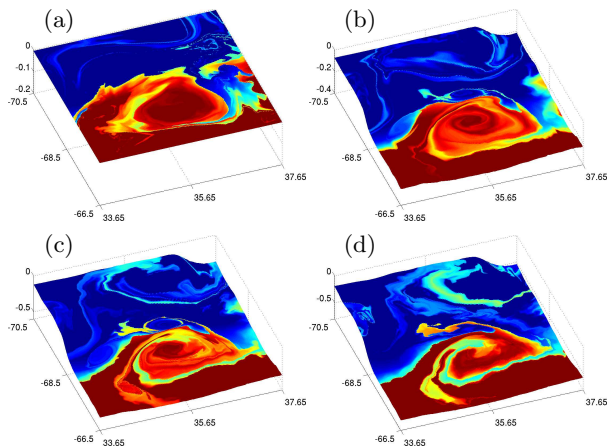}
\caption{3D view of $\psi^k_{2,1,0,1}$ on day 80 for four different isopycnal layers
$k=$5, 14, 16 and 18.}
\label{fig.depth}
\end{figure}

\section{Partition}
\label{sec.partition}

As shown above, the  Lagrangian statistics of individual basis functions provide detailed information on the location,
spatial extent and advective connections between coherent structures in the flow field.  The goal here is to
combine the complimentary information provided by a collection of functions to partition the initial condition
domain $M$ into discrete sets within which trajectories possess `most similar' statistics over the set of observables. 

The finite-time/open-flow nature of the oceanic transport problem poses questions for application of standard
ergodic partition approach. For a dynamical system defined by a (Poincar\'e) map on a closed domain, 
ergodic theory allows one to compute the convergence properties of time averages of observable functions 
(see discussion in \cite{levnajic10}). In the ocean flow, however, there is no
well defined, long-time average and, even if there were, one is typically not interested in transport
on such long times.
In the appendix we develop the mathematical formalism of
Finite-Time Ergodic (FiTEr) partitions.
A FiTEr partition associated with a finite set of functions $\{f_i\}, i=1,...,n$ is the partition of $\cD$ into
joint level sets of $\tilde f_i^{t_0,T}(\bx), i=1,...n$.
We approximate these joint level sets using clustering algorithms from machine learning.
\bigskip

Using the Haar-like functions, $\psi_{\mathbf{r},\mathbf{s}}$, as basis-set elements $\{f_i\}, i=1,...,n$ and
accounting for the lack
of orthogonality on $M$ when $r>1$, we will consider sets of functions such as those given in Table \ref{tab.f}. 
This 24 dimensional basis set is obtained by selecting $\mathbf{r} = (1/2,1,10)^2$ and eliminating the
\textit{mean} mode, $r_x=r_y=10$ and any redundancies produced by symmetry. Such collections are the Haar
equivalents of standard discrete Fourier bases and similar sets can be produced by varying the choice of $\mathbf{r}$.

\begin{table}
\centering
\begin{tabular}{cc|c|c|c|c|c|}
\cline{3-7}
 & & $r_x=10$ & \multicolumn{2}{c|}{$r_x=1$} & \multicolumn{2}{c|}{$r_x=1/2$}\\ \cline{3-7}
 & & $s_x=0$ & $s_x=0$ & $s_x=1$ & $s_x=0$ & $s_x=1$\\ \hline
\multicolumn{1}{|c|}{$r_y=10$} & $s_y=0$ & - & 1 & 2 & 3 & 4\\ \hline
\multicolumn{1}{|c}{
\multirow{2}{*}{$r_y=1$}}  & \multicolumn{1}{|c|}{$s_y=0$} & \textcolor{blue}{5}  &
                        6  & 7  & 8  & \textcolor{red}{9} \\ \cline{3-7}
\multicolumn{1}{|c}{}      & \multicolumn{1}{|c|}{$s_y=1$} & \textcolor{blue}{10} &
                        11 & 12 & 13 & \textcolor{red}{14} \\ \hline
\multicolumn{1}{|c}{
\multirow{2}{*}{$r_y=1/2$}}& \multicolumn{1}{|c|}{$s_y=0$} & \textcolor{blue}{15} &
                        16 & 17 & 18 & \textcolor{red}{19} \\ \cline{3-7}
\multicolumn{1}{|c}{}      & \multicolumn{1}{|c|}{$s_y=1$} & 20 & 21 & 22 & 23 & 24 \\ \hline
\end{tabular}
\caption{Haar function set. 3D projections on  the functions in blue and red are shown in Fig.~\ref{fig.cluster3D}.}
\label{tab.f}
\end{table}

Given a set of $N$ functions, each of the 90,000 drifters advected in each of the
$L=23$ isopycnals, i.e.\ $n=90,000 \times 23 = 207\cdot10^4$ trajectories, produces a single point in the
$\mathbb{R}^N$ space of observables. 
Using the set of 24 functions given in Tab.~\ref{tab.f}, 
samples of the cloud of $n$ data points, indicated by small colored points, are shown in Fig.\ref{fig.cluster3D} in two,
3D projections of $\mathbb{R}^{24}$. The specific function triplets defining the projections,
$(r_x,r_y,s_x,s_y)=
(10,1,0,0)$,
$(10,1,0,1)$,
$(10,1/2,0,0)$ and 
$(r_x,r_y,s_x,s_y)=
(1/2,1,1,0)$,
$(1/2,1,1,1)$,
$(1/2,1/2,1,0)$ are marked in blue and red in Tab.~\ref{tab.f}. 

The distribution of data in clearly organized and non-uniformly distributed in $\mathbb{R}^{24}$.
Fig.~\ref{fig.cluster3D}a, the projection on quasi-meridional modes ($r_x = 10$), shows alignment
of the data cloud on distinct planes. Values of the odd, $f_{10,1,0,1}$, mode concentrate almost 
entirely  on $\pm1$ except where the even mode $f_{10,1,0,0} = 1$. The projection on observables with smaller 
spatial scale (Fig.~\ref{fig.cluster3D}b), while more complex, continues to show organization of the data on
lower dimensional subspaces of $\mathbb{R}^{24}$.

\begin{figure}
\centering
\includegraphics{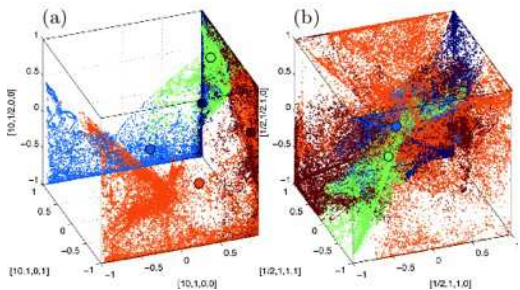}
\caption{
3D projection of the 24D partition for functions
$(10,1,0,0)$,
$(10,1,0,1)$ and
$(10,1/2,0,0)$ (left) and
$(1/2,1,1,0)$,
$(1/2,1,1,1)$ and
$(1/2,1/2,1,0)$ (right).
The large spheres indicate the location of each cluster centroid.}
\label{fig.cluster3D}
\end{figure}

\subsection{\lowercase{$k$}-\lowercase{means} clustering}

We seek to define a partition of $M$ based on the distribution of the $n$ data points in the 
$N$ dimensional space of observables. In other words, we seek to group the $n$ points into
$m$ clusters, such that the statistics defined by the $N$ observables in each cluster are closer
to each other than to those in other clusters. A standard data-mining cluster algorithm is provided
by the $k$-means method. 
The algorithm aims to partition the data points into $m$ clusters so that the within-cluster
sum of the squared Euclidean distance between each point and the cluster centroid is minimized.
Specifically, we use a standard $k$-means implementation provided by the Statistics and Machine Learning
MatLab toolbox.
The algorithm is initialized using a subsample of the data, randomly selected according to a previously fixed
seed for reproducibility. 
To ensure that the results are independent of the initialization, the procedure is repeated several times with
different initial estimatess for the centroids.

Results of the $k$-means algorithm for $m=5$ clusters are shown in Fig.~\ref{fig.cluster3D} where the individual data
points have been color-coded based on cluster index and the projections of the $m$ cluster centroid locations are
marked by large, solid spheres. The corresponding partition of $M$ in physical space for isopycnal
layer 15 is given in Fig.~\ref{fig.clusters}a.

\begin{figure}
\centering
\includegraphics{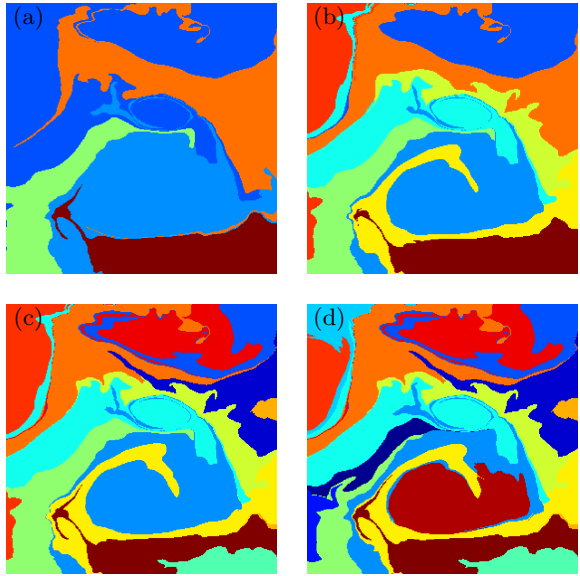}
\caption{
Top to bottom ,left to right: ergodic partition using different number of clusters $m=5,9,13,17$.
Layer is $k=15$, day is 80 and the dimension of the function space is $n=24$.}
\label{fig.clusters}
\end{figure}

\subsection{Convergence properties}

The ultimate $k$-means based partition of $M$ depends on the number of clusters sought, $m$, and both the
dimensionality, $N$, and the specific spatial form of the input basis set of observables. 

Fig.~\ref{fig.clusters} shows the dependence, for fixed isopycnal layer 15, of the partition on the number of 
clusters used. Each frame corresponds to partitions of the $N=24$ Haar basis set with $m = 5,9,13,17$.
The main cyclonic eddy, previously identified in individual observables, is clearly delineated in each partition.
Each partition also clearly identifies coherent anti-cyclonic sets on the jet side of the cyclone. 
Increasing the number of clusters simply refines the partition by dividing existing features into distinct 
subsets.
For $m=5$, the main cyclonic eddy (light blue) has both a thin entraining lobe surrounding the anti-cyclone and
a larger detraining region to the southwest. For $m=9$ clusters, the detraining region
forms a new (yellow) cluster and the definition of the cold-core 'eddy' is refined. Similar refinement of
the detrainment region and the eddy core takes place when $m$ increases from 9 to 13.
In this sense, for fixed $N$, the resulting partition is stable with respect to the number of clusters;
$m$ effectively sets the spatial scale of the resulting coherent sets. 
Again, for simplicity of interpretation of the time dependent, three-dimensional (3D+1) results, we take $m=9$
in what follows. 

\bigskip

\begin{figure}
\centering
\includegraphics{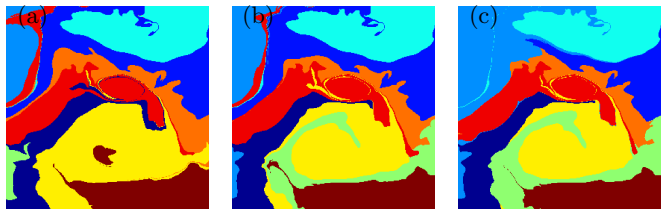}
\caption{Ergodic partition at isopycnal $i=15$ using different function space dimensions $n=8,24,48$.}
\label{fig.funcsize}
\end{figure}

The function set presented in Tab. \ref{tab.f} can be modified by altering the choice of $\mathbf{r}$ used
to define the basis.
Fig. \ref{fig.funcsize} shows a comparison at isopycnal $i=15$ of the $m=9$ partition for 8, 24 and 48 
dimensional function sets generated by choosing $\mathbf{r} = (1,10), (1/2,1,10)$ and $(1/4,1/2,1,10)$. As such the 
8 dimensional set is a proper subset of the 24 dimensional set which in turn is a proper subset of the 48 
dimensional set. Convergence of the large scale structures with the number of Haar basis elements is apparent.
The results shown in Fig. \ref{fig.functions} indicate that even moderate spatial scale observables are homogenized by 
representative temporal averages. As such, including smaller scale
features in the functional basis has a nominal effect on the main coherent features identified by the 
global $k$-means partition. 


As a further test of the sensitivity of the overall partition to the details of the space of observables, we also 
consider a standard, harmonic plane wave basis (see Eq. \ref{eq.f2}).
A comparison of the $m=9$ $k$-means partition for 24 
dimensional Haar and 16 dimensional harmonic set is shown in Fig.~\ref{fig.funcsize2}.
The harmonic set corresponds to $\mathbf{k}=(0,1/2,1)^2$ where the $k_x=k_y=0$ has been discarded. Note that each pair
of wavenumber values ($k_x,k_y$) has associated two solutions for the real and imaginary parts of the harmonic function.

Again, the 
overall partition, especially the shape and location of the coherent cold-core eddy, is remarkably insensitive to the 
functional form of the basis.
The major difference between the harmonic functions and the Haar functions is the density distribution of the range of 
the functions. By construction, the range of the Haar functions is concentrated at $\pm 1$. As seen in the figure, the broader 
distribution of the range of the harmonic functions produces, for fixed averaging time, a more detailed, finer spatial 
scale, picture of the flow field.

\begin{figure}
\centering
\includegraphics{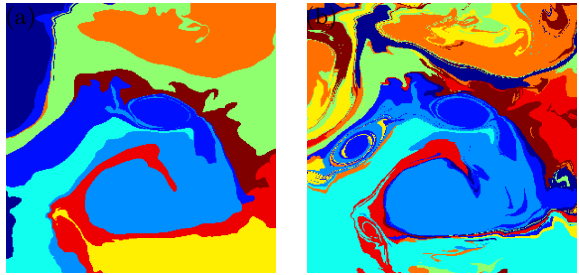}
\caption{Comparison of the 24 dimensional Haar basis (left) and 16 dimensional harmonic basis (right).}
\label{fig.funcsize2}
\end{figure}

\subsection{Depth dependence: Comparison to LCS metrics}

For a given initialization time, the partitions result from $k$-means clustering of the full Lagrangian data set
across all 23 isopycnal levels. The depth dependence of the coherent sets produced by the $m=9$, $N=24$ partition 
is shown at three representative isopycnal levels in the first columns of  Figs.~\ref{fig.mh} and \ref{fig.dle}.
Here the model trajectories were launched on day 74. 

The $k$-means partition provides considerable refinement of the depth dependence of the main cyclonic 
structure shown previously, for a single observable initialized on day 80, in Fig.~\ref{fig.depth}. 
In the near-surface, the cold-core ring at day 74 (light blue)  is entraining fluid along a thin filament on the
southern side of the jet and the transport geometry is similar to that of 'eddy pinch-off' processes previously
observed in simplified, analytic and low-resolution models \cite{DutPal94,poje99} of eddy-jet interactions.  
At mid-depth, the entrainment from the jet is impeded by the appearance of a binary pair of smaller scale, 
anti-cylonic eddies (medium blue) in the region between the jet and the cyclone. The main cyclonic feature is
now filamented by detraining lobes of fluid (light blue, yellow). At depth, the main core continues to shrink in size,
developing into a arrowhead or mushroom shape usually associated with vortex dipole configurations.

\bigskip

Also shown in Figs.~\ref{fig.mh} and \ref{fig.dle} are direct comparisons between the partitions derived from $k$-means 
clustering of Lagrangian averaged observables and more traditional Lagrangian Coherent Structure identification metrics,
namely the forward time \textit{Mesohyperbolicity field} \cite{Mezic2010} and the field of forward-time 
\textit{Direct Lyapunov Exponents} \cite{Haller2002}. 
Both metrics have been widely used to analyze Lagrangian oceanographic transport \cite{Waugh2008,Beron2008,Mezic2010}.

Following \cite{Haller2002}, the maximal finite-time Direct Lyapunov Exponent is given by 
\begin{equation}
 DLE(\mathbf{x}_0,t_0,\tau) =  \frac{1}{2\tau} \log{\sigma_t(\mathbf{x}_0}) 
\end{equation}
where $\sigma_\tau(\mathbf{x}_0)$ is the maximal eigenvalue of the Cauchy-Green strain tensor, $C^{\tau}(\mathbf{x}_0)$, 
the symmetrized gradient of the flow with respect to initial conditions.

Alternatively, LCS features can be extracted by constructing the mesohyperbolicity field, $K(\mathbf{x}_0)$,
over initial conditions defined by 
\begin{equation}
K(\mathbf{x}_0,t_0,\tau)=\det{\nabla \mathbf{u}^*(\mathbf{x}_0,t_0,\tau)},
\end{equation}
where $\mathbf{u}^*(\mathbf{x}_0,t_0,\tau)$ is the average Lagrangian velocity on $\left[t_0,t_0+\tau \right]$ starting from
$\mathbf{x}_0$.
Note that, when $t \rightarrow t_0$, the mesohyperbolicity reduces to the well-known, Eulerian Okubo-Weiss criterion.

Like the Lagrangian averaged observables, both the DLE and mesohyperbolicity metrics depend explicitly on  trajectory
integration time. This dependence is shown across columns for the 3 ispocynal layers in 
Figs.~\ref{fig.mh} and \ref{fig.dle}. 

The distinct differences between the cluster-based partition, arrived at by globally searching for coherent sets across all 
trajectories, and the LCS approaches based entirely on local trajectory information are apparent in the comparisons.
The highly time-dependent velocity fields, generated by a model resolving the largest \textit{sub}mesoscale flow
dynamics, produces complex spatial $DLE$ and $K$ fields for any reasonable integration period.
While the main coherent structures, such as the cold-core eddy and associated smaller-scale anti-cyclones are revealed by
the maps of both $DLE$ and $K$, the boundaries of these structures and the advective pathways connecting them are obscured 
by the rich complexity and filamentation of the LCS fields. With a broad range of energetic spatial scales, advective transport 
both within and between the time-dependent mesoscale structures is modulated by the presence of smaller features with shorter
lifetimes resulting in extreme tangling of the finite-time hyperbolic sets organizing the advection. 

\begin{figure*}[ht]
\centering
\includegraphics{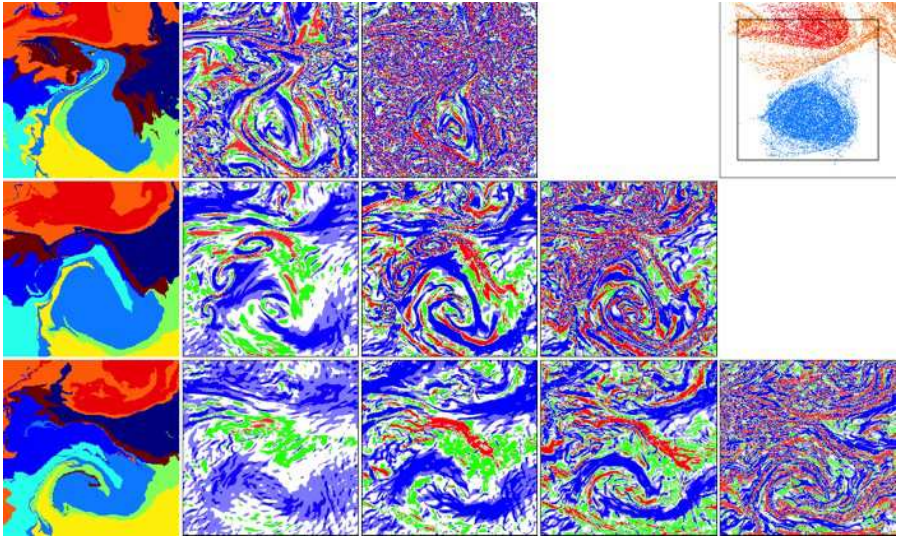}
\caption{First column: Day 74 partition derived from clustering for layers 1 (top), 15 (middle) and 20 (bottom). 
Averaging times used are $\tau$ = 7, 12 and 24 days respectively. Second column: Mesohyperbolicity for the same three
layers using averaging time $\tau$ = 3 days. Third column: Same for $\tau$ = 7 days. Fourth column: $\tau$ = 12 days. Fifth column:
$\tau$ = 24 days. 
Final figure in first row shows 7 day trajectories for the red, orange and blue clusters colored 
accordingly.}
\label{fig.mh}
\end{figure*}

\begin{figure*}[ht]
\centering
\includegraphics{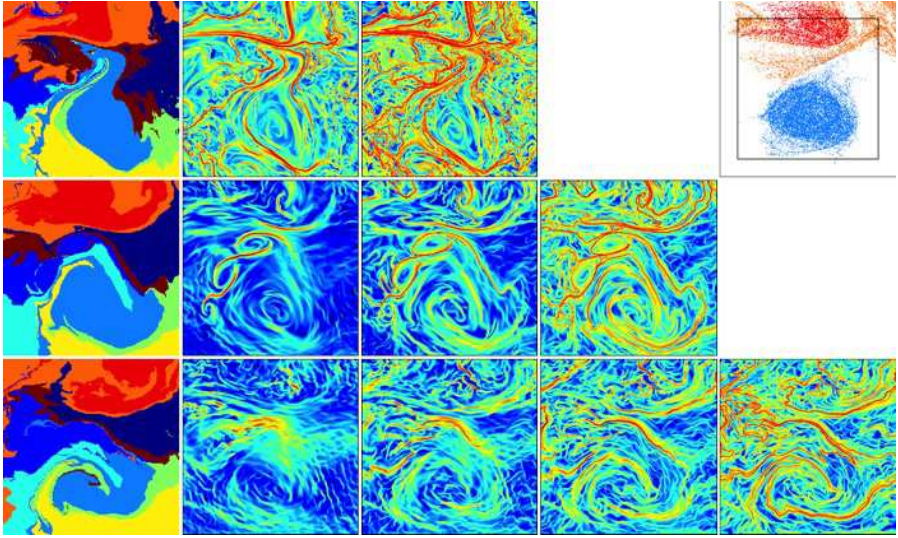}
\caption{Same Fig.~\ref{fig.mh} but for  DLE fields instead of Mesohyperbolicity.}
\label{fig.dle}
\end{figure*}

To illustrate the  connection between the finite-time partition at the surface and particle behavior, trajectories associated with 
3 clusters are shown in the far right top panel in Fig. \ref{fig.mh} and \ref{fig.dle}. 
The clusters chosen correspond to the main cyclone (light blue partition) and two  groups of initial positions in the jet (red and orange). 
Particles were released on day 74 and advected over the averaging period (7 days).
The results clearly show that all initial conditions associated with the light-blue cluster, even those in the entraining lobe, exhibit
strong cyclonic circulation during the averaging period.
In the jet region, the clustering algorithm distinguishes initial conditions with strongly anti-cyclonic swirl (red) from those that advect with
the jet without being entrained into an eddy (orange).

\section{3D+1 reconstruction: Temporal evolution of partition}
\label{sec.reconstruction}

As seen above, even when working in an isopycnal setting, piecing together the depth dependence of 
coherent features from discrete vertical sets of $DLE$ or $K$ fields is difficult. In contrast,
the $k$-means partitions, as defined here, directly incorporate information from all layers and
produce fully three-dimensional partitions at any initialization time.  To investigate the temporal
evolution of the features, one need only to reproduce the analysis for a series of launch dates.

Fig.~\ref{fig.3dp1} shows the time and depth dependence of the $m=9$ and $N=24$ partition at
four different isopycnal layers (rows) for five different initialization times (columns).

The cyclonic eddy (light blue) partition deforms and rotates under the action of the jet
(red/orange for $t\le74$ days and orange/green for $t>74$).
The results for the top isopycnal reveal that for $t>80$ days the upper lobe of the cold core weakens
indicating a suppression of the entrainment that is complete at day 86. 
This event is accompanied by the emergence of anticyclonic structures (dark blue and cyan partitions)
in the upper and lower sides of the cyclonic eddy that exhibit respective entraining lobes as observed at day 90.
While the spatial extension of the anticyclonic eddy in the north side grows moderately in the vertical
at days 70-80, later launch times partitions reveal much stronger reinforcements.
For instance, at day 86 the entraining lobe of the dark blue partition strengthens as depth increases and at day 90
the structure resembles that of a dipole with comparable sizes of the partitions for the cyclone and the anticyclone.
On the other hand, the second anticyclone structure (cyan) clearly observed at the surface on day 90
vanishes rapidly in the vertical. 

In general, the results for the vertical evolution the cyclone indicates a reduction in its size in the vertical.
On days 70-80 and at the deepest isopycnal level, the partion of the cold-core eddy exhibits
a characteristic arrowhead shape generated by the relatively small anticyclonic features.
The results suggest that later on these anticyclonic structures grow bottom-up and eventually become
large enough to generate a dipole structure clearly observed at day 90.

The cold-core ring that spanned across the whole water column for 10 days, \emph{decays} thereafter indicating the
cyclone vertical shrinking.
This process is quantified by estimating this structure volume and the vertical location of its centroid
as shown in Fig.~\ref{fig.cent_vol}.
The results show that as time evolves the cold cyclone size decreases by 50\%
going from 2,000 to 1,000 cubic kilometers over the span of 20 days.
As expected, this volume reduction is accompanied by a reduction in the vertical
centroid location.

\begin{figure}
\centering
\includegraphics{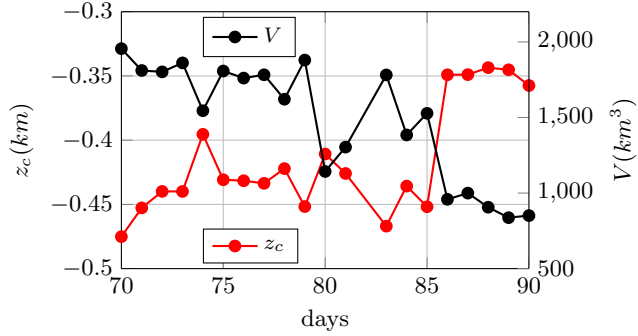}
\caption{Temporal evolution of the cold-core centroid vertical location (red) and volume (black).}
\label{fig.cent_vol}
\end{figure}

\begin{figure*}[ht]
\centering
\includegraphics{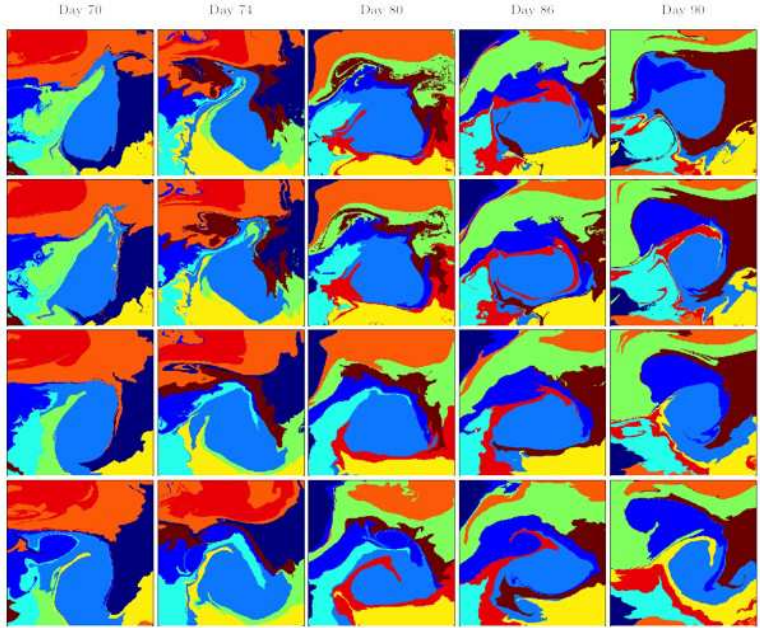}
\caption{Temporal and depth dependence of the $k$-means partition for $m=9$ and $N=24$.
Columns: days 70, 74, 80, 86 and 90. Rows: isopycnals 1, 8 , 12 and 15.}
\label{fig.3dp1}
\end{figure*}

A full picture of the 3D+1 flow evolution in shown in Fig.~\ref{fig.new2}
that contains three-dimensional snapshots of two selected structures at days 73, 78, 84 and 90.
Both the vertical shrinking of the cyclonic structure (light blue) and the bottom-up
surface emergence of the anticyclone
(dark blue) are clearly depicted.
These images reveal the extent of both structures and the interaction between them as time evolves.
Specifically, it is possible to observe in detail how the anticyclonic core, initially at the deep northwest side
of the cold ring, travels in the clockwise direction while resurfaces and impedes the entrainment into the cyclone
which then vertically shrinks as shape changes under the action of the jet stream.

\begin{figure*}[ht]
\centering
\includegraphics{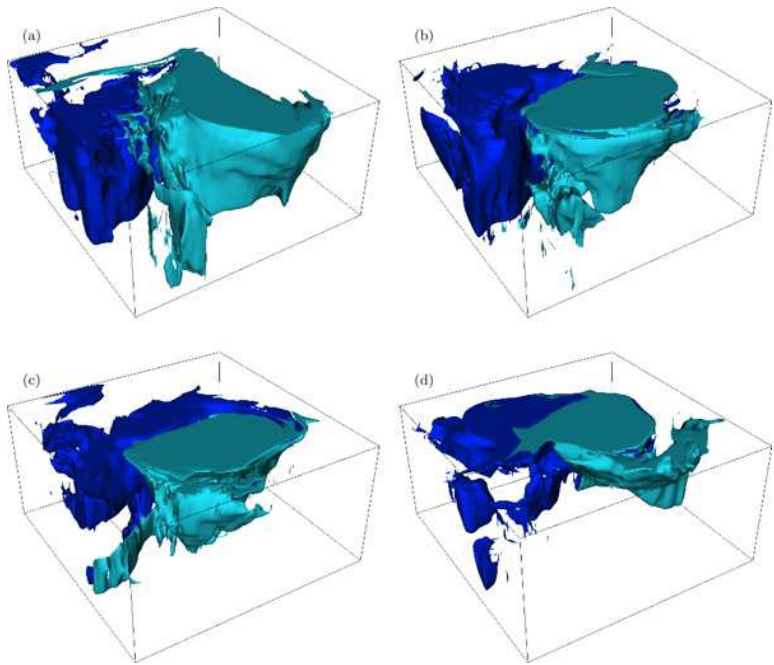}
\caption{Temporal evolution of the three-dimensional partitions for the cold-core eddy (light blue) and
the anticyclonic feature (dark blue) at days 73, 78, 84 and 90.}
\label{fig.new2}
\end{figure*}

\section{Conclusions}
\label{sectVI}
Transport processes between cold-core eddies and the Gulf Stream 
has been studied using a novel approach inspired on the Koopman \emph{observables}.
The flow field consists in a HYCOM database of the North Atlantic in which we advect particles at different isopycnal levels.
Given a basis set of functions, these observables are obtained by taking time-averages of such functions along
particle trajectories advected by the flow field.
As in other classical Lagrangian Coherent Structure (LCS) methods, the spatial scale of the flow can be modulated via 
appropriate integration time.

Once the Lagrangian statistics for the selected integration time span are obtained,
the resulting combined high-dimensional space of observables are partitioned using $k$-means.
The robustness of such partition was assessed by comparing the results for different
number of clusters and basis set definition and dimension.
Unlike classic LCS approaches based on local trajectory information on each isopycnal level,
the current analysis provides a natural representation of the three-dimensional
flow features that can be readily extended to account for temporal dependence by repeating
the analysis for different initial `particle launch' dates.

For the specific database used here, the results show that
the characteristic \emph{lobes} of detraining flow on the southern side of the jet eventually are stopped
when the mid-depth anticyclonic features growing in the bottom-up direction pinch-off this transport route.
At the same time and as the old-core eddy shape changes under the action of the Gulf Stream,
its vertical extent shrinks leading to volume reductions of $\approx$50\%
by the end of the 20-day period.
At deeper levels, the cold-core topology is observed to evolve from a arrowhead shape to a
dipole as the anticyclonic feature grows.

\section*{Acknowledgements}

This work was partially supported by  ONR Grants  N00014-11-1-0087 and N00014-14-10633.

\section{Appendix: Finite Time Ergodic (FiTEr) partition}
In this appendix we define the notion of Finite-Time Ergodic Partition (FiTEr partition)
and discuss some of its properties. 

We start with a point $\bx$ in a compact domain ${\cal D}$ of the flow,
and a sequence of continuous functions $\{f_i\}:\cD\rar \R$. Let 
\beq
\dot \bx=\bv(\bx,t)
\label{eq:vec}
\eeq
 be a vector field on $\cD$,
and its flow $S^t(\bx_0):\cD\rar S_{t_0}^t(\cD),$ where $S^t(\cD)$ is not necessarily equal to $\cD$ (i.e. 
trajectories can enter and exit $\cD$ during the time interval of observation). To each point $\bx\in \cD$ 
we associate a sequence of numbers $\{\tilde f_i^{t_0,T}(\bx)\}$ defined by 
\beq
\tilde f_i^{t_0,T}(\bx)=\frac{1}{T-t_0}\int_{t_0}^T f_i(S^t(\bx))dt=\frac{1}{T-t_0}\int_{t_0}^T U^t_{t_0}f_i(\bx))dt,
\label{eq:fiter}
\eeq
where $U^t_{t_0}$ is the finite-time Koopman operator associated with (\ref{eq:vec}).
For fixed $\bx$ and replacing $f_i$ with an arbitrary function $f$, Eq.\eqref{eq:fiter}
is a linear functional on the vector space of continuous functions on $\cD$ that, equipped with the Borel
$\sigma$-algebra, has a representation as a measure:
\beq
\tilde f^{t_0,T}(\bx)=\int_{\cD} f d\nu^{t_0,T}_\bx.
\label{eq:fitermeas}
\eeq
In Eq.\eqref{eq:fitermeas} $\nu^{t_0,T}_\bx$ is the measure that can be thought as a Dirac delta measure on
$\cD$ concentrated on the finite time part of the trajectory passing through $\bx$ at $t_0$
and ending at time $T$.
The family of measures $\nu^{t_0,T}_\bx$ characterizes the average
action in the time interval $[t_0,T]$ of the Koopman operator on observables,
$$
\frac{1}{T-t_0}\int_{t_0}^T U^t_{t_0}f_i(\bx))dt.
$$
Note the similitudes with the ergodic measures
(that are eigenfunctions of the Koopman operator adjoint, the Perron-Frobenius operator)
for vector fields in Eq.\eqref{eq:vec} that are independent of $t$ \cite{Mezic:1994}.

To get an understanding of the measure $\nu^{t_0,T}_\bx$, we would need to evaluate it on all
the continuous functions on $\cD$, which is unfeasible.
Instead, we prove that taking a set of functions $\{f_i^{t_0,T}(\bx)\}$
whose linear combinations are dense in $C(\cD)$ and evaluating the finite-time averages as defined in
\eqref{eq:fiter} leads to a unique determination of the measure $\nu^{t_0,T}_\bx$:
\begin{theorem}
Let $\nu^{t_0,T}_\bx$ be the measure determined by Eq.\eqref{eq:fitermeas} and $\tilde f_i^{t_0,T}(\bx)$
the finite-time averages defined in Eq.\eqref{eq:fiter}.
If linear combinations of continuous functions in $\{ f_i\}$ are dense in $C(\cD)$, then $\nu^{t_0,T}_\bx$
is uniquely determined by the set of values $\{\tilde f_i^{t_0,T}(\bx)\}$.
\end{theorem}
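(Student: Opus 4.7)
The plan is to reduce uniqueness to a standard two-step argument: transfer the identity ``integrate against $f_i$'' from the generating family $\{f_i\}$ to all of $C(\cD)$ using density together with the continuity of the integration functional, and then invoke the Riesz representation theorem on $\cD$.

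First I would check that $\nu_\bx^{t_0,T}$ is a bona fide finite positive Borel measure. From the occupation-time interpretation $\nu_\bx^{t_0,T}(A) = \frac{1}{T-t_0}\int_{t_0}^T \mathbf{1}_A(S^t(\bx))\,dt$ for Borel $A \subset \cD$, the total mass is at most $1$, with equality when the trajectory stays in $\cD$ throughout $[t_0,T]$. In particular the linear functional $\Lambda_\bx(f) := \int_\cD f\, d\nu_\bx^{t_0,T}$ satisfies $|\Lambda_\bx(f)| \le \|f\|_\infty$, so it is a bounded positive linear functional on $(C(\cD),\|\cdot\|_\infty)$.

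Second, suppose only the values $\{\tilde f_i^{t_0,T}(\bx)\} = \{\Lambda_\bx(f_i)\}$ are given. By linearity these determine $\Lambda_\bx(g)$ for every $g$ in the linear span of $\{f_i\}$. The density hypothesis furnishes, for any $f \in C(\cD)$, a sequence $g_n$ in that span with $\|g_n - f\|_\infty \rar 0$; combined with the bound $|\Lambda_\bx(g_n) - \Lambda_\bx(f)| \le \|g_n - f\|_\infty$, this forces $\Lambda_\bx(g_n) \rar \Lambda_\bx(f)$. Hence the data determine $\Lambda_\bx$ on all of $C(\cD)$. The Riesz representation theorem on the compact metric space $\cD$ then asserts that a regular Borel measure is uniquely recoverable from its associated positive linear functional on $C(\cD)$; applying this to $\nu_\bx^{t_0,T}$ yields the claim.

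The main point requiring care, rather than a genuine obstacle, is matching the topology in which $\{f_i\}$ is dense with the topology in which $\Lambda_\bx$ is continuous. The density assumption is in $C(\cD)$ with the uniform norm, and the continuity estimate uses exactly the uniform norm together with finiteness of $\nu_\bx^{t_0,T}(\cD)$; these are compatible, so the density-to-uniqueness transfer is seamless. A minor side remark worth including is that when the trajectory leaves $\cD$ partway through $[t_0,T]$, the measure has total mass strictly less than $1$ but remains finite, so neither the boundedness of $\Lambda_\bx$ nor the applicability of Riesz is affected.
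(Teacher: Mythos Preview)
Your proposal is correct and follows essentially the same route as the paper: extend the identity from the span of $\{f_i\}$ to all of $C(\cD)$ by density, then appeal (explicitly in your case, implicitly in the paper's) to Riesz representation on the compact domain. The only cosmetic differences are that the paper frames the argument by contradiction with a second measure $\mu$ and passes to the limit via Lebesgue's Dominated Convergence, whereas you work directly with the bounded linear functional and use the sharper estimate $|\Lambda_\bx(g_n)-\Lambda_\bx(f)|\le \|g_n-f\|_\infty$; your version is slightly cleaner and makes the finiteness of the measure and the role of Riesz explicit, but the underlying idea is identical.
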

\begin{proof}
Assume not.
Then there is a measure $\mu\neq \nu^{t_0,T}_\bx$ s.t. $\tilde f_i^{t_0,T}(\bx)=\int_\cD f_i d\mu, \forall i$.
Consider an arbitrary function $f\in C(\cD)$.
Then there is a sequence of $g_i, i\in {\mathbb N}$ that are linear combinations of $f_i$ such that
$\lim_{i\rar\infty} g_i=f$.
Now, each $g_i$ is continuous, and convergence of $g_i$ to the continuous function $f$ on compact $\cD$
implies that they are bounded above by some continuous function.
By Lebesgue's Dominated Convergence, then,
$\tilde \lim_{i\rar\infty}g_i^{t_0,T}(\bx)=\lim_{i\rar\infty}\int_\cD g_i d\mu=\int_{cD}f d\mu$.
But the sequence $\tilde g_i^{t_0,T}(\bx)$ converges to $\tilde f^{t_0,T}(\bx)=\int_\cD f d \nu^{t_0,T}_\bx$
and thus we get a contradiction.
\end{proof}

The result above also has the following simple but important corollary:
\begin{corollary}
The family of measures $\nu^{t_0,T}_\bx$ does not depend on the chosen set of functions $\{ f_i\}$.
\end{corollary}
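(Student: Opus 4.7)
The plan is to read the corollary literally: the construction in the theorem takes as input a dense family $\{f_i\}\subset C(\cD)$ plus the collected numbers $\{\tilde f_i^{t_0,T}(\bx)\}$, and outputs a unique measure; the corollary claims that swapping the dense family $\{f_i\}$ for another dense family $\{g_j\}$ produces the same measure. The cleanest way to see this is to observe that, in both cases, the output measure is just the Riesz representer of one and the same linear functional, namely $\Lambda:f \mapsto \frac{1}{T-t_0}\int_{t_0}^T f(S^t(\bx))\,dt$ on $C(\cD)$, which is manifestly independent of any choice of basis.

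Concretely, I would fix two families $\{f_i\}$ and $\{g_j\}$ whose linear combinations are each dense in $C(\cD)$, and let $\mu_f,\mu_g$ be the corresponding measures produced by the theorem. I want to show $\int_\cD h\,d\mu_f=\int_\cD h\,d\mu_g$ for every $h\in C(\cD)$; equality of the measures then follows from uniqueness in the Riesz representation. Fix such an $h$ and, using density, pick a sequence $g_n$ of linear combinations of $\{f_i\}$ with $g_n \to h$ uniformly on $\cD$. By linearity, Eq.\eqref{eq:fitermeas} applied to each $f_i$ extends to $\int_\cD g_n\,d\mu_f = \frac{1}{T-t_0}\int_{t_0}^T g_n(S^t(\bx))\,dt$. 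The uniform convergence $g_n\to h$ provides a common continuous majorant, so Lebesgue dominated convergence can be applied on both sides, yielding $\int_\cD h\,d\mu_f = \frac{1}{T-t_0}\int_{t_0}^T h(S^t(\bx))\,dt$. Repeating verbatim with a sequence of linear combinations of $\{g_j\}$ gives the identical right-hand side for $\mu_g$, hence $\int h\,d\mu_f=\int h\,d\mu_g$ for every $h\in C(\cD)$.

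The remainder is bookkeeping: both $\mu_f$ and $\mu_g$ represent the same bounded linear functional $\Lambda$ on $C(\cD)$, so by the uniqueness half of the Riesz-Markov-Kakutani theorem on the compact space $\cD$, $\mu_f=\mu_g$. Since $\bx$, $t_0$, $T$ were arbitrary, the entire family $\{\nu^{t_0,T}_\bx\}$ is independent of the chosen dense set. I expect no real obstacle; the only point that deserves care is the justification of the limit exchange, which is handled as above by the uniform boundedness of uniformly convergent sequences of continuous functions on the compact domain $\cD$. In effect, the corollary is just the statement that the theorem identifies $\nu^{t_0,T}_\bx$ with the intrinsic trajectory-averaging functional, for which the dense family plays only the role of a (redundant) coordinate system.
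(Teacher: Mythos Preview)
Your proposal is correct. The paper gives no separate proof of the corollary, treating it as an immediate consequence of the theorem together with the Riesz-representation definition of $\nu^{t_0,T}_\bx$ in Eq.~\eqref{eq:fitermeas}; your argument simply makes that implicit reasoning explicit by showing that any dense family recovers the same Riesz representer of the trajectory-averaging functional $\Lambda$.
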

Note that, physically, the measure $\nu^{t_0,T}_\bx$ is just the {\it ``sojourn measure"} that
assigns to any open set $A$ in $\cD$ the amount of time that the trajectory starting at $\bx$
at time $t_0$ spends in $A$ in the period $[t_0,T]$.
This leads us to call $\nu^{t_0,T}_\bx$ the family of sojourn measures. In the limit $t\rar \infty$,
for steady, periodic or quasiperiodic velocity fields, sojourn measures converge to ergodic
measures (see \cite{Mezic:1994} for the proof of the ergodic partition theorem that uses the same
ideas as above, in the infinite-time limit, and additionally proves ergodic properties of the
family of measures thus obtained).

\begin{definition}
A FiTEr (Finite-Time-Ergodic) partition associated with a finite set of functions $\{f_i\}, i=1,...,n$ is the partition of
$\cD$ into joint level sets of $\tilde f_i^{t_0,T}(\bx), i=1,...n$.
\end{definition}

We remark on two consequences of the above result:
\begin{inparaenum}
\item the approaches based on integrals of positive functions \cite{manchoetal:2013} along a trajectory
can fail to uniquely identify underlying objects. Specifically, the notion of Lagrangian descriptors was
based on this, but the notion is neither new - as it was proposed much earlier,
in \cite{Mezic:1994,Poje1999} - nor complete, because it does not have the uniqueness property due to
choice of only positive functions and
\item attempts to identify invariant structures using averages such as pair
dispersion \cite{Haller:2015} also suffer from the same non-uniqueness problem.
\end{inparaenum}

Farazmand results in \cite{farazmand:2015} clarify the relationship between
averaging of scalars over Lagrangian trajectories and some of the geometrical approaches.
The above measure-theoretic approach shows that single scalar averages are not enough
and gives a comprehensive theory, as opposed to geometric approaches.
In some cases, even over finite time, sets of points are identified - e.g.\ for specific times,
corresponding to the period, invariant closed curves have the same finite-time averages for all their points.
But, in general, it is fruitful to embed the $n$ averages $\tilde f_i,i=1,...,n$ into
an $n$ dimensional space and cluster them, as done in this paper to identify zones with similar trajectory behavior.

Another connection of the this formalism with existing methods can be done as follows:
the notion of almost invariant sets has been advanced as a concept that captures fluid parcels traveling together,
initially Dellnitz \cite{Dellnitz1999, Dellnitz2002}.
The most recent application of the notion is that of finite-time coherent sets \cite{froyland2015}.
A metrically invariant set can be defined as a set $A$ that satisfies
$$
\int_A \left( \int \chi_A d\nu_\bx^{t_0,T} \right) d\mu=1.
$$
The above statement just means that the time average on interval $[t_0,T]$ of the indicator function
$\chi_A$ of $A$ is 1 on almost every trajectory starting in $A$ at time zero is 1.
Thus, {\it maximally metrically invariant sets} are those that satisfy
\beq
A_{maxinv}=\argmin_{A\in {\cal B}}|\int_A  \left(\int \chi_A d\nu_\bx^{t_0,T} \right)d\mu-1|,
\label{maxinv}
\eeq
the $\argmin$ exists, where ${\cal B}$ is the Borel $\sigma$-algebra on $\cD$.
Of course, we could ask for the lower bound of the expression
$\epsilon=|\int_A  \left(\int \chi_A d\nu_\bx^{t_0,T} \right) d\mu-1|$ and find
{\it almost maximally invariant sets}, parametrized by the value of $\epsilon$. 
The approach we advanced here can also be used to define finite-time maximally coherent sets via an extremum principle.
We say that a set $A\in {\cal B}$ is maximally coherent provided it satisfies
\beq
A_{maxcoh}=\argmin_{A\in {\cal B}}\frac{1}{2(\mu_A)^2}\int_{A\times A}  d(\nu_\bx^{t_0,T},\nu_\by^{t_0,T}) d\mu(\bx)d\mu(\by),
\label{maxcoh}
\eeq
where $d(\nu_\bx^{t_0,T},\nu_\by^{t_0,T})$ is some chosen distance on the space of
measures (for example the Wasserstein metric), and the factor of $2$ in the denominator is
due to the fact that $d(\nu_\bx^{t_0,T},\nu_\by^{t_0,T})=d(\nu_\by^{t_0,T},\nu_\bx^{t_0,T})$.
Once again, if $\argmin$ does not exist, we could ask for lower bounds, as done above for almost invariant sets.
This definition makes sense for a deterministic dynamical system,
without any artificial addition of stochastic dynamics as in \cite{froyland2015}.
It makes intuitive sense, since the idea is that trajectories at $\bx$ and $\by$ are compared based
on their statistical properties - in particular, their finite-time statistics. 
If those properties are close for a whole set of trajectories starting in $A$ then we proclaim that set coherent. 

One can develop a multi-scale approach this way, both in space and time, by defining scales
$\tau_s$ in time and $\tau_{\mathbf l}$ in space,
using a set of functions of time $\psi_{\tau_s}^j(t-c_s)$ and space $f^m(\bx-\bc^k_{\tau_{\mathbf l}})$,
where $j$ is the counter of functions at scale $\tau_s$ in time,
$m$ is the counter of functions at scale $\tau_{\mathbf l}$ in space,
$c_s$ are centerpoints of functions $\psi$ in the interval $t_0,T$ of observation
and $\bc^k_{\tau_{\mathbf l}}$ are centerpoints of functions $f$ defined on the domain $\cD$.
We form convolution integrals
\beq
\int_{t_0}^T \psi_{\tau_s}^j(t-c_s) f^m(\bx-\bc^k_{\tau_{\mathbf l}})=\tilde f_{\tau_{\mathbf l},m}^{{\tau_s},j}(\bx).
\eeq
if there is a countable set of scales in time and space, i.e.\,
a countable set of quantities out of which we can define a trajectory measure at $\bx$ as above.
In this paper, we chose $\psi$ to be functions that are constant on subintervals of
$[t_0,T]$ and $f$ as wavelet functions or harmonic functions in space.


\end{document}